\newtheorem{theorem}{Theorem}
\date{}
\begin{document}
\title{Long-run User Value Optimization in Recommender Systems through Content Creation Modeling}

\author{Akos Lada, Xiaoxuan Liu, Jens Rischbieth, Yi Wang, Yuwen Zhang \\ (Facebook)}

\vspace{0.5in}

\maketitle

\begin{abstract}
Content recommender systems are generally adept at maximizing immediate user satisfaction but to optimize for the \textit{long-run} user value, we need more statistically sophisticated solutions than off-the-shelf simple recommender algorithms. In this paper we lay out such a solution to optimize \textit{long-run} user value through discounted utility maximization and a machine learning method we have developed for estimating it. Our method estimates which content producers are most likely to create the highest long-run user value if their content is shown more to users who enjoy it in the present. We do this estimation with the help of an A/B test and heterogeneous effects machine learning model. We have used such models in Facebook's feed ranking system, and such a model can be used in other recommender systems.
\end{abstract}


%
%

\section{Introduction}

Content recommender systems have become a large source of how people consume content, may it be part of social media (eg Youtube, Facebook, Instagram, Tiktok) or other media types (eg news aggregators such as Google News). These systems usually take a high number of available pieces of content (items) that are eligible to be viewed by a person using the service when they log in, and they rank these items in a way that maximizes relevancy and value to the user. General methods to achieve this ranking include building prediction models to predict immediate interaction events such as likes, comments, or whether a post is worth the user's time, either through methods such as logistic regressions or neural networks (\citealt{hinton1986}), or taking estimating the best rank order of posts (\cite{liu2009}). However these prediction models maximize only the immediate value of content seen to the user. 

To maximize the \textit{lifetime} value the user derives from using the recommender system and sees a piece of content, we need to augment the recommender system to take dynamic effects into account. The novelty of this paper that it models such dynamic effects and explainsthe intertemporal nature of the optimization problem, and lays out a scaleable machine learning solution that we have successfully deployed at Facebook's feed ranking. 

More concretely, we model the scenario that a user's actions affect what content is going to be available to them in the future. If they engage with or otherwise give positive feedback on a piece of content, some content creators are more likely to create more of this type of content in the future, while some others are impervious or less likely to change their behavior (\citealt{gneezy2011}). In addition, users' actions can also create positive externalities (\citealt{pigou2017}) for each other, ie one user can create value for other users: if user $a$ likes a piece of content producer by producer $b$, which encourages producer $b$ to create more content, then if a third user $c$ is eligible to see all content that $b$ produces, user $c$ will also benefit from user $a$'s actions. In this paper we work out the solution to how to maximize user value in our recommender system through taking the content production encouragement and user-to-user spillover mechanisms into account, and thereby creating the maximal user value for each user in the ecosystem.

The solution we developed and have deployed to Facebook App's Feed ranking system is based on a two step-process. First, to find the optimum user satisfaction in our ecosystem we run an A/B test (\citealt{kohavi2013online}) that increases the distribution of randomly selected producers' posts. Then in a second step we use machine learning to find heterogeneous treatment effects in this experiment in terms of posting/engagement-received, ie we estimate a model about which content producers are most likely to increase their production and create further valuable content if they receive engagement. We can model this through heterogeneous effects machine learning such as a three-tree model, which we use at Meta and we lay out the details in the paper below. After the model is verified to work in a test set, we deploy the heterogeneous effects machine learning model in our ranking system, rewarding content producers that create good content in the future, in order to give an optimum amount of long-run user happiness to our users. The model, which scores all content producers, can be continuously retrained on a small producer holdout where producers do not receive treatments.

Further extensions can be used to take into account even the quality of the content produced by various producers through inventory demotion experiments. Users' other actions beyond posting quality can also be modeled.

Our method can be used in any recommender system or content ranking platform where content producers change their future behavior based on outcomes in the present, and where we want to maximize the long-run value the system delivers to its users. This can range widely from social media, e-commerce and search systems to even systems selling services such as ride-sharing or holiday rentals. We derive conditions under which our method can have outsized importance for any such platform: (1) the more forward-looking the decision-makers are and thus the more they value the future highly; (2) if users derive the most short-run value from different content than the most long-run value; (3) the stronger content producers change their behavior depending on distribution they receive from the recommender system; (4) the more users' choices create positive externalities for each toher. If any one or multiple of these conditions is strongly true in a recommender system, it is worth considering the algorithm described in this paper.

\section{Context}

\subsection{Illustrative Model}

The dynamics of our system can be illustrated with a simple `micro-economic model' (\citealt{mascolell1995}). There are two people using the recommender system, call them viewers: $v_1$ and $v_2$, who consume content from two producers, $p_1$ and $p_2$ in two periods, $t$=1 and $t$=2. Assume both viewers have limited time to spend on the recommender platform, so they are each able to view a single piece of content. Also assume that viewers' satisfaction can be expressed as $l_1(t)$ and $l_2(t)$ for viewer 1 and viewer 2 and is captured simply as how much these two viewers like what they see in period $t$. For simplicity $l_1(t),l_t(2)\in[0,1] $, where $l_1(t)=0$ means viewer 1 gets no satisfaction from the content they see in period $t$ and $l_1(t)=1$ means they get the maximum possible satisfaction, but $l_1(t)$ can also take on intermediate values. We also assume that consumption is more valuable in the first period than the second period, ie all satisfaction in the second period are discounted by a $0<\beta<1$  discount factor. This means the satisfaction of viewer $i\in\{1,2\}$ can be expressed as:
$$S_i=l_i(1)+\beta*l_i(2),$$
and the recommender system's objective function is to maximize $\sum_{i\in\{1,2\}} S_i.$ We also assume that when a user views a piece of content they like, ie $l_i(t)>0$ they engage with the piece of content in a way that is revealed to the content producers, eg they like the content, they comment on it, or otherwise reveal their satisfaction.

Note that the simple non-inter-temporal maximization would be simply to rank in each period for each viewer the content on top which is most likely to be liked the most by the viewer. However with dynamic incentives, the optimal solution can change. Assume that content producer 1 creates no content in the second period regardless of engagement received in the first period, whereas producer 2 is more interested in engagement received in the first period: they produce no content in the second period if they do not receive any engagement in the first period and produce a single piece of content if (s)he receives engagement from at least 1 user in the first period. Now if the value of the first producers' content is higher for both our viewers than the second producer's piece of content $v_1>v_2$ then the short-horizon recommender system would show simply producer 1's content to both of our viewers. However as long as $(1+\beta)*v_2>v_1+0$ for at least one of our viewers, the inter-temporal optimizer would instead show producer 2's content to at least one of the viewers, and creating positive externalities for the other user (new content inventory in the second period). In words, as long as the loss is bigger from not seeing any content in the second period than the short-term cost of not seeing the optimal content in the first period, it is worth taking content production incentives into account in our recommender system.

By using A/B testing and machine learning together we can learn that producer 2's second period decisions depend highly on their first-period distribution. More generally, we can learn which producers are most likely to create more content when they receive engagement. In an A/B test where we give more distribution to randomly selected producers we would learn that producers such as producer 2 increase their production in treatment compared to control when they receive engagement, whereas producers such as producer 1 do not increase their production. We can learn this machine learning model to target the producers in question.

\subsection{General Model}

To generalize our model, assume that we have $N$ users in our recommender system who are expected to visit the platform over $T$ periods ($T$ can be infinity), each time consuming $J$ pieces of content, so our objective is to maximize the sum of discounted long-run user value in our system:
$$\sum_{i=1}^{N}\sum_{t=1}^{T}\sum_{j=1}^{J}\beta^t l_{itj}(R_{it}),$$
where $l_{itj}\in[0,1]$ represents the value user $i$ derives in period $t$ from the content they see in $j$'th position in the recommender system. This value depends on the ranking $R_{it}$ our system gives to the user $i$ in period $t$. Note that for simplicity we assume a user gets the same value from the same post if they see it in period $t$, but assume that there are a large amounts of available posts for each user $i$ in each period $t$, so a piece of content need to be ranked within the top $J$ posts for the user to see it and get satisfaction from it.

In this general model the logic laid out above means that the optimal solution to maximize $\sum_{i=1}^{N}\sum_{t=1}^{T}\sum_{j=1}^{J}\beta^t l_{itj}(R_{it})$ is not simply to maximize the short-term value for each user. We state this in the following theorem: 

\begin{theorem} The $R'_{it}$ that maximizes $short-term$ and only $individual$ user value, ie the $R_{it}$ that solves $arg max_{Rit}\{\sum_{j=1}^{J}\beta^t l_{itj}(R_{it})\}$ for each $i$ and $t$ might \textit{not} be the global optimum $R''_{it}$:
$$arg max_{R''it}(\sum_{i=1}^{N}\sum_{t=1}^{T}\sum_{j=1}^{J}\beta^t l_{itj}(R''_{it}))$$. 

In fact, $R'_{it}\neq R''_{it}$, ie the short-run optimum is not the long-run optimum for at least some $i$ and $t$, when content availability in the future depends on prior periods' consumption behavior, ie $l_{itj}(R_{it})$ influences $l_{itj}(R_{i't'})$ for some $t'> t$ and some $i'$ ($i'$ can be the same as $i$). So there is `spillover' from one period to another. If this holds, the short-run optimum will be different than the long-run optimum (ie $R'_{it}\neq R''_{it}$) as long as we can find some ranking order $R''_{it}$ for some $i$ and $t$ that compared to the short-run maximizing option $R'_{it}$ satisfies the following inequality:
$$\sum_{i=1}^{N}\sum_{j=1}^{J} l_{itj}(R'_{it})-\sum_{i=1}^{N}\sum_{j=1}^{J} l_{itj}(R''_{it})<$$  $$\sum_{i=1}^{N}\sum_{k=t+1}^{T}\sum_{j=1}^{J}\beta^{k-t} l_{ikj}(R''_{it})-\sum_{i=1}^{N}\sum_{k=t+1}^{T}\sum_{j=1}^{J}\beta^{k-t} l_{ikj}(R'_{it}),$$
where the left hand side reflects the short-run benefit in period t from ranking with the short-run optimum $R'_{it}$ compared to ranking $R''_{it}$, and the right hand side represents the long-run benefits from ranking  ranking $R''_{it}$ compared to the short-run ranking $R'_{ik}$ from $t+1$ period to $T$, ie the future consequences of ranking by different policies in period $t$. Essentially this condition means that there is at least one ranking, where the discounted gains in the long-run outweigh the costs of a sub-optimal short-run ranking.
\end{theorem}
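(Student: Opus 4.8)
The plan is to prove the statement by a single-period perturbation (exchange) argument: I will show that as soon as the displayed inequality holds for some candidate period-$t$ ranking, the short-run policy $R'$ fails to maximize the global objective, which forces the global optimum to differ from it. First I would fix the total value functional
$$V(R)=\sum_{i=1}^{N}\sum_{t=1}^{T}\sum_{j=1}^{J}\beta^{t}l_{itj}(R_{it}),$$
and make the spillover hypothesis explicit: by assumption each future value $l_{ikj}$ depends not only on the contemporaneous ranking $R_{ik}$ but also on earlier rankings $R_{i't}$ with $t<k$, so a change to the period-$t$ ranking propagates forward into every later period and, possibly, across users, while leaving all periods strictly before $t$ untouched.

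Next I would construct an alternative policy $R^{\star}$ that coincides with $R'$ in all periods except period $t$, where it instead uses the candidate period-$t$ ranking profile guaranteed by the hypothesis (I would rename it $\tilde R_{t}$ to avoid the statement's reuse of the symbol $R''$ for both the candidate deviation and the global optimum). I would then evaluate $V(R^{\star})-V(R')$. Because the two policies agree before period $t$ and spillover runs only forward in time, those earlier terms cancel and the difference factors as $\beta^{t}$ times the sum of the immediate period-$t$ value change and the discounted change accumulated over all future periods. By construction the immediate change equals the negative of the left-hand side of the inequality and the aggregated future change equals its right-hand side, so $V(R^{\star})-V(R')=\beta^{t}\bigl(\mathrm{RHS}-\mathrm{LHS}\bigr)$, which is strictly positive exactly because the hypothesis asserts $\mathrm{LHS}<\mathrm{RHS}$.

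From $V(R^{\star})>V(R')$ I would conclude that $R'$ is not a global maximizer, hence $R'\neq R''$, which is the assertion. I would then record why the spillover condition is indispensable: absent spillover, a period-$t$ perturbation leaves every future value unchanged, so the right-hand side collapses to zero; but since $R'$ maximizes the short-run term the left-hand side is nonnegative, and the strict inequality can never be met. Thus spillover is precisely the ingredient that makes a value-improving deviation available.

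The main obstacle is clean bookkeeping of the dynamics: I must make fully precise how the period-$t$ ranking enters the arguments of the downstream value functions $l_{ikj}$, and verify that the decomposition of $V(R^{\star})-V(R')$ gathers all downstream effects, across every future period $k>t$ and every affected user $i'$, into exactly the right-hand side of the hypothesized inequality. A secondary, purely notational hazard is the statement's double use of the symbol $R''$, which I resolve by relabeling the candidate deviation as $\tilde R_{t}$ and reserving $R''$ for the global optimum throughout.
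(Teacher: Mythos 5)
Your proposal is correct and takes essentially the same route as the paper: both arguments observe that the hypothesized inequality certifies a one-period deviation at $t$ whose discounted future gains exceed its immediate short-run loss, so the short-run maximizer cannot be globally optimal (and both note that absent spillover no such deviation exists). Your version is just a more explicit rendering of the contradiction the paper sketches --- constructing $R^{\star}$ and computing $V(R^{\star})-V(R')=\beta^{t}(\mathrm{RHS}-\mathrm{LHS})>0$ --- and your relabeling of the overloaded symbol $R''$ is a sensible notational fix.
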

\begin{proof}
Following the logic from the simple two period model above, we can see that the assumption laid out in the theorem is a necessary assumption for the long-run global maximum might not be the same as the short-run individual-level 1-period-horizon maximum. The assumption states that there is spillover from one period to another (ie one period affects another period's available ranking). This requirement essentially means that $l_{itj}(\cdot)$ is not just a function of $R_{it}$ but also of possibly many $R_{i't'}$'s where $t'\in [1,t-1]$ and $i'$ can be any user including $i$. Thus $l_{itj}(\cdot)$ for any period $t>1$ is more completely written as $l_{itj}(R_{it},R_{i't'},...)$ where every single user $i'\in [1,N]$ and every single $t'\in[1,t-1]$ is an argument of the $l_{itj}(\cdot)$ function. This clearly means that since any $R_{it}$ can affect future $l_{itj}(\cdot)$, and there is at least one that indeed does affect future $l_{itj}(\cdot)$'s, the one-period optimization could be different from the multi-period optimization, as the former ignores the effect on multiple $l_{itj}(\cdot)$'s.

Next, as long as the assumption stated in the theorem holds, we know that intertemporal optimization $might$ be different but this assumption is not enough. For instance it could be that the discount factor is so low that the short-run sub-optimal ranking decisions for long-run gains are never worth it. Or it could be that the producers who create the most long-run valuable content also create the most short-run valuable content, so the short-run vs long-run discounted optimization yield the same result.

On the other hand, if the condition stated in the theorem holds, then there must be at least one ranking that we'd choose differently if we take the long run into account. To see this, assume the opposite: ie even though the inequality in the theorem holds both the short run and the long run optimizer chooses the same policies. But in this case given that the inequality holds, it means that if $R'_{it}$ is the optimal policy then we can find a policy $R''_{it}$ which yields  positive improvements for the long-run optimizer because $\sum_{i=1}^{N}\sum_{k=t+1}^{T}\sum_{j=1}^{J}\beta^{k-t} l_{ikj}(R''_{it})-\sum_{i=1}^{N}\sum_{k=t+1}^{T}\sum_{j=1}^{J}\beta^{k-t} l_{ikj}(R'_{it})-(\sum_{i=1}^{N}\sum_{j=1}^{J} l_{itj}(R'_{it})-\sum_{i=1}^{N}\sum_{j=1}^{J} l_{itj}(R''_{it}))$ is positive. So $R'_{it}$ cannot be the optimal long-run policy. If instead $R''_{it}$ is the optimal short and long-run policy then we can find an $R'_{it}$ such that $\sum_{i=1}^{N}\sum_{j=1}^{J} l_{itj}(R'_{it})-\sum_{i=1}^{N}\sum_{j=1}^{J} l_{itj}(R''_{it})$ is positive, hence $R''_{it}$ cannot be the optimal short-run policy. We have arrived at a contradiction, hence the optimal short-run vs long-run policy cannot be the same as long as the condition in the theorem holds. QED.
\end{proof}

\subsection{Comparative Analysis: when is long-run ecosystem modeling relevant?}

Looking at the important condition above that needs to hold for short-run optimization and long-run optimization to differ, we can analyze each parameter to understand under what circumstances our model is more important to apply in the content recommender system:
$$\sum_{n=1}^{N}\sum_{j=1}^{J} l_{itj}(R'_{it})-\sum_{n=1}^{N}\sum_{j=1}^{J} l_{itj}(R''_{it})<$$ $$\sum_{n=1}^{N}\sum_{k=t+1}^{T}\sum_{j=1}^{J}\beta^{k-t} l_{ikj}(R''_{it})-\sum_{n=1}^{N}\sum_{k=t+1}^{T}\sum_{j=1}^{J}\beta^{k-t} l_{ikj}(R'_{it}).$$

First note that the higher $beta$ or $T$ is the more important it is to model the long-run and not just the short run. Ie the more forward-looking our decision-makers are, the more likely long-run modeling is needed.

Second, the bigger the difference between $l_{itj}(R'_{it})$ and $l_{itj}(R''_{it})$, the more likely it is that we need to model the short run and long run. In other words, if users get very similar value out of different rankings, the simpler short-run optimization suffices, but if some type of content yields especially big values to users, it is worth modeling the long-run more.

Third, the more $l_{it'j}(R_{it})$ is responsive to $R_{it}$ where $t'>t$, ie the bigger $d(l_{it'j}(R_{it}))/d(R_{it})$ is in absolute value, the more important it is to model the long run, especially if some producers respond very differently than others. In other words if future production is very dependent on the present engagement, it's worth modeling the long run in detail.

Finally, if there are a lot of interactions in our network so $$d(l_{i't'j}(R_{it})) / d(R_{it})$$ is high in absolute value, where $i\neq i'$ and $t'>t$, the more long-run ecosystem value is important. Our users still get their individually maximum long-run user value but positive spillovers mean they create value to each other too. To illustrate this case, imagine that a user's actions lead to more production of a certain type of inventory in the future, and that inventory is highly beneficial to other users. With this user-to-user spillover channel being strong, it's again highly worth modeling long-run network effects.
 
\section{Model}

To model the heterogeneous effects and find the producers for whom engagement received is most meaningful, we first run an A/B test because we need experimental variation for our causal inference (\citealt{rubin2011causal}, \citealt{pearl2009causality}). A/B tests are standard in the technology industry nowadays. In the A/B test a randomly selected set of producers are given more distribution, while a control set of producers are receiving no change in their content's ranking. Such A/B tests might have already been run in the past (\citealt{peysakhovich2018learning}), in which case we can use the already existing experiment. Once we have the experimental data collected over a reasonable period (eg a week), then we use a three-tree method to estimate the heterogeneous treatment effects (similar methods are laid out in \citealt{kunzel2019metalearners} and \citealt{wager2018estimation}). All data collected has been aggregated and de-identified.

First we randomly select some producers to be our evaluation group and do not use this test group until our models below are fully built. Then we take a large set of features describing content producers such as how many followers they have, how often they produce content, etc, all de-identified. It is important that each of these features need to be observed from before the experiment started. Then we take the features to predict individual level outcomes for each producer in parallel on the treatment group and the control group, and fit a content-producer-level model, $m_{treatment}$ and $m_{control}$. This modeling can be done with any off-the-shelf machine learning model, and we found that gradient boosted decision trees (GBDT's, \citealt{friedman2001greedy}) or neural networks (\citealt{hinton1986}) are good choices to capture the non-linearities with high performance. Note that our two models allow us to estimate the individual treatment effect for any single producer $i$, by evaluating the difference $$m_{treatment}(i)-m_{control}(i)$$. 

To go beyond individual treatment effects and generalize $$m_{treatment}(i)-m_{control}(i)$$ to the entire population, we take $$m_{treatment}(i)-m_{control}(i)$$ for all $i$'s in either treatment and control (only in our training data, we do not make any use of the test data yet). Then we fit a gradient-boosted decision tree or a neural network model on $$m_{treatment}(i)-m_{control}(i)$$ using the full set of producer features that we have built our \\ $m_{treatment}(i)$ and $m_{control}(i)$ with. Call this model $m_{difference}$. This $m_{difference}$ will in fact tell us what the predicted treatment effect (or sensitivity) is for any producer group. To interpret this model we can use standard methodologies such as feature importance plots.

Finally, once the model is fit, we can use the evaluation set to evaluate our model. Using $m_{difference}$ we can divide producers into $high$ and $low$ group based on a percentile cut-off (eg p20, p50, p80) of the model. Then we compare the $high$ group's production with the $low$ group's to validate if the $high$ group is indeed statistically significant to produce more content as a result of engagement received.

\section{Example and Results}

We have used the above method at Meta's Facebook News Feed. At News Feed, professional creators can accumulate followers through good content creation, which followers can interact with in various ways: liking, commenting, clicking, watching. Content creators' content can also be recommended to non-followers, however this is governed by a separate ranking system so we limited our work here to `connected distribution', ie distribution to followers.

We have selected 2$\%$ of producers at random and increased their distribution to their followers. Comparing with another randomly selected 2$\%$ producers without the treatment, the experiment resulted on average in +26$\%$ increase in the likes these producers received and +9$\%$ increase in the comments these producers received compared to their control counterparts. As a result, in the experiment we see that on average the treatment producers have increased their posting by +0.19$\%$. 

To find the heterogeneous effects, we then fit our three-tree model. All three models finished successfully and parameters in the models were set to reach the lowest point of the learning curve. They are all gradient boosted decision trees given the outcome variable being absolute outcomes and to capture non-linear relationships. The most important features in the first and second model are around the audience size of the producers and their general activity levels. The third model, the most key model, is trained with 242,083 examples. We use 1000 rounds of training (found through parameter tuning), 128 leaves and a feature sampling rate of 0.75. The model loss is 10,882.6 which is 73.58\% of the baseline loss of 14,788.9, showing that the model is learning in a valid way and the learning curve flattens out at this normalized loss rate. Cross-checking the top features in the model, and putting users into two groups 'low' and 'high' based on this feature, we see statistically significant difference between the two groups (0.11\%+-0.17 vs 1.5\%+-1.1), further underpinning the validity of the model's learnings. For further verification a follow-up pair of experiments can be run, where we increase the distribution of equal numbers of $high$ and $low$ producers and verify that the production gains in the $high$ group are higher.

\section{Deployment}

Once we built our heterogeneous treatment effects machine learning model, we can deploy our model for the entire ranking system. Note that this means assigning each producer in the recommender system a score. The score does not necessarily need to be real-time since the model should be quite stable because it is unlikely that a producer is found to be changing their behavior a lot for engagement received one day but much less so the next day. In fact such instability should be avoided as it is likely important for producers to be able to roughly predict the distribution they get, otherwise they will not invest the effort into making high-quality content, not knowing whether they'd get any meaningful amount of distribution. To increase stability, we can train our machine learning model above with only stable features such as how many followers a producer has.

\begin{figure}[h]
\caption{Deployment system example. The model is learned through a production machine learning system such as Meta's FBLearner system. The model can be learnt on a blanket boosting experiment before deployment and on a holdout after deployment. The model outputs producer scores which then feed into production ranking. Viewers are served by the content ranking system which ranks producers' content through the producer score. The producer score is created by applying the FBLearner model on each individual producer's features.}
\centering
\includegraphics[width=0.5\textwidth]{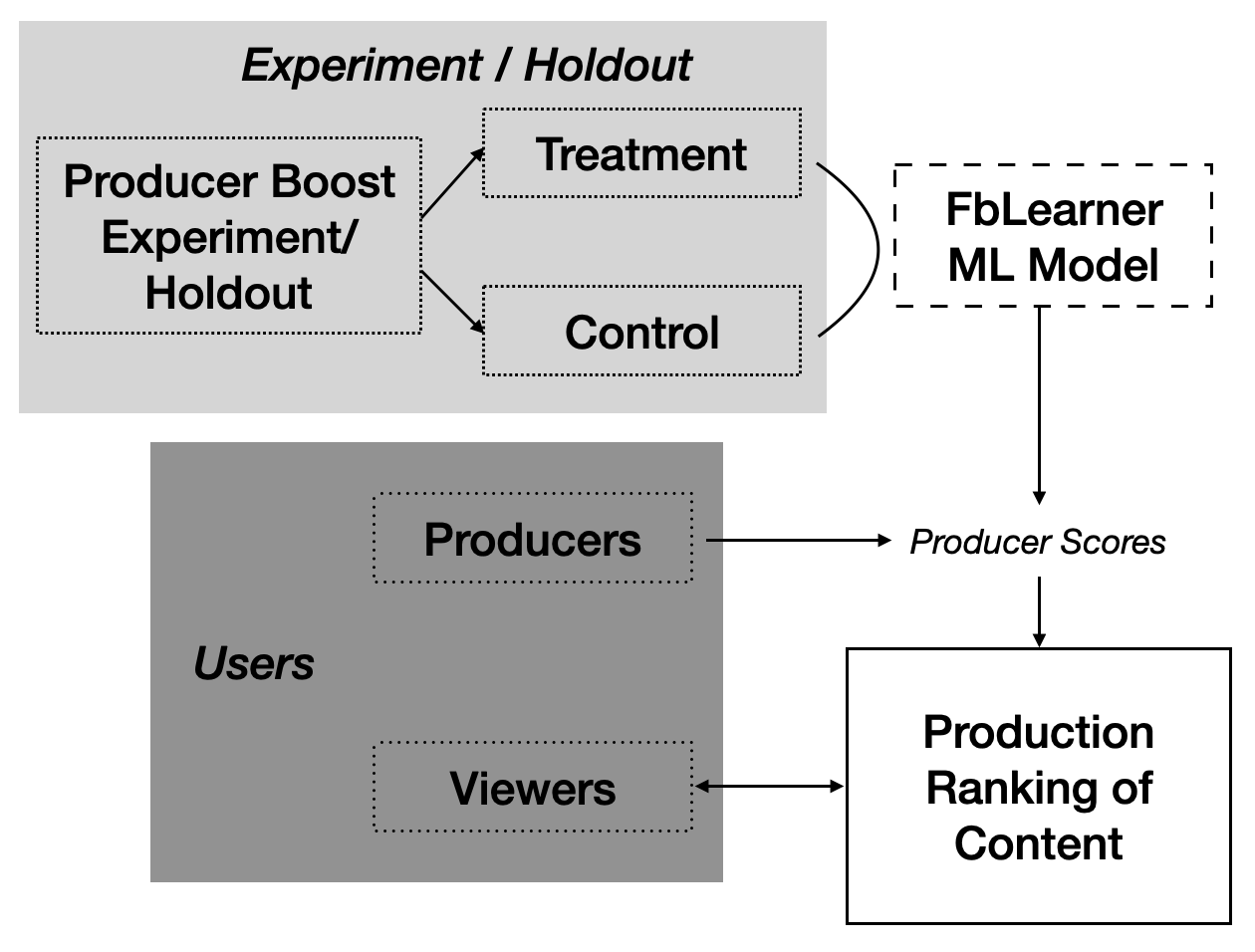}
\end{figure}

Despite the stability, it is also important for our model to get retrained. As the model is trained on an experiment, it is important to have a small holdout at launch-time, where some producers will not be receive the heterogeneous score treatment when they create content. To be more precise, we should give these `holdout producers' the average score, so that they do not get too high or too low distribution once the model is deployed for the rest of the producers. The model can then be retrained periodically from data on the holdout. In deployed models we found a one-week retraining to be reasonably well performing. The holdout will also tell us when the model has become outdated and does not increase user value any more, in which case we can deprecate the model.

An alternative deployment strategy is to use the model to build into a goal metric that counts the engagement given to producers weighted by their heterogeneous effects score, but not directly use the model in the day-to-day recommender system. If this route is taken, we can evaluate all changes that are proposed to be made on the recommender system based on this goal metric. The advantage of this method is that if the model retrains, there is no immediate, possibly discontinuous, change in production ranking. Instead the method just updates the evaluation metrics, and any new model that is under consideration to be launched into the recommender system will be evaluated accordingly. We have used variants of this approach too at Meta with reasonable success.

\bibliographystyle{ACM-Reference-Format}
\bibliography{creators}

\end{document}